 \definecolor{BLACK}{gray}{0}
 \definecolor{WHITE}{gray}{1}
 \definecolor{RED}{rgb}{1,0,0}
 \definecolor{GREEN}{rgb}{0,1,0}
 \definecolor{BLUE}{rgb}{0,0,1}
 \definecolor{CYAN}{cmyk}{1,0,0,0}
 \definecolor{MAGENTA}{cmyk}{0,1,0,0}
 \definecolor{YELLOW}{cmyk}{0,0,1,0}
\newtheorem{theorem}{Theorem}
\newtheorem{proposition}[theorem]{Proposition}
\newtheorem{example}[theorem]{Example}
\newtheorem{remark}[theorem]{Remark}
\newtheorem{definition}[theorem]{Definition}
\newenvironment{proof}[1][Proof]{\noindent\textbf{#1.} }{\ \rule{0.5em}{0.5em}}
\begin{document}

\title{Information backflow may not indicate quantum memory}

\author{Micha{\l} Banacki}
\affiliation{International Centre for Theory of Quantum Technologies, University of Gda\'{n}sk, 80-308 Gda\'{n}sk, Poland}
\affiliation{Institute of Theoretical Physics and Astrophysics, Faculty of Mathematics, Physics and Informatics, University of Gda\'{n}sk, 80-308 Gda\'{n}sk, Poland}
\author{Marcin Marciniak}
\affiliation{Institute of Theoretical Physics and Astrophysics, Faculty of Mathematics, Physics and Informatics, University of Gda\'{n}sk, 80-308 Gda\'{n}sk, Poland}
\author{Karol Horodecki}
\affiliation{International Centre for Theory of Quantum Technologies, University of Gda\'{n}sk, 80-308 Gda\'{n}sk, Poland}
\affiliation{Institute of Informatics, Faculty of Mathematics, Physics and Informatics, National Quantum Information Centre, University of Gda\'{n}sk, 80-308 Gda\'{n}sk, Poland}
\author{Pawe{\l} Horodecki}
\affiliation{International Centre for Theory of Quantum Technologies, University of Gda\'{n}sk, 80-308 Gda\'{n}sk, Poland}
\affiliation{Faculty of Applied Physics and Mathematics, National Quantum Information Centre, Gda\'{n}sk University of Technology, 80-233 Gda\'{n}sk, Poland}
%%%%%%%%%%%%%%%%%%%%%

\begin{abstract}We analyze recent approaches to quantum Markovianity and how they relate to the proper definition of quantum memory. We point out that the well-known criterion of information backflow may not correctly report character of the memory falsely signaling its quantumness. Therefore, as a complement to the well-known criteria, we propose several concepts of {\it elementary dynamical maps}. Maps of this type do not increase distinguishability of states which are indistinguishable by von Neumann measurements in a given basis. Those notions and convexity allows us to define general classes of processes without quantum memory in a weak and strong sense. Finally, we provide a practical characterization of the most intuitive class in terms of the new concept of {\it witness of quantum information backflow}.
\end{abstract}

%%%%%%%%%%%%%%%%%%%%%

\keywords{Quantum non-Markovianity, Quantum memory, Backflow of information}

\maketitle

%%%%%%%%%%%%%%%%%%%%%%%%%%%%%%%%%%
%%%%%%%%%%%%%%%%%%%%%%%%%%%%%%%%%%

\textit{Introduction.-} Nowadays, due to constant development in both theoretical and experimental branches of quantum information theory, topic of quantum memory become more and more relevant. In particular idea of Markovian evolution (evolution without memory) coming form the theory of quantum open systems \cite{RH12,BP07} has been recently studied in extensive way within different frameworks \cite{BLPV16,RHP14,PRFPM18}. While well-defined for classical case, notion of Markovianity is still lacking a single definition in a quantum setting. Two typical approaches used for description of memoryless processes are either related to the divisibility of dynamics \cite{RHP10} or the backflow of information \cite{BLP09}.

The main interest related to Markovianity comes from negation of its definition - processes described as non-Markovian should express quantum memory effect. Quantum non-Markovinity can be therefore treated as a resource for various quantum informational tasks like quantum computation, communication or cryptography. Despite this philosophy, current approaches to Markovianity in a quantum setting usually do not concern themselves with proper distinction between memory effects which are truly quantum. The main idea behind this paper is to challenge this status quo and show different direction for further investigation.

The aim of this note is to propose a new approach towards description of quantum information blackflow. In section \textit{Preliminaries on Markovianity} we briefly recall notions of CP-divisibility and lack of information backflow, when in section \textit{Backflow of a classical type} we discuss realization of classical dynamics as quantum dynamical maps and introduce example indicating conceptual problem with current concept of quantum memory.

In section \textit{Elementary dynamical maps} we present a definition of elementary dynamical maps, while in section \textit{Quantum memory} we use this notion to propose generalized concept of dynamical maps without quantum memory (i.e. with no quantum information backflow). Finally we formulate a few open questions within presented framework.

\textit{Preliminaries on Markovianity.-} Consider a finite-dimensional quantum system. General form of evolution of that system can be given by time-dependent family of quantum channels or completely positive and trace preserving (CPTP) maps $\Lambda_t:M_d(\mathbb{C})\rightarrow M_d(\mathbb{C})$. We will call such family by the name of dynamical map and denote it by $\left\{\Lambda_t\right\}_{t}$ with $t$ belonging to some considered interval of observation.

Notion of Markovianity is related to processes without a memory. In the classical case this idea is unambiguously defined. However, on the contrary, there in no definition of quantum Markovianity which can be commonly seen as appropriate. There are at least two approaches in which one can define memoryless quantum dynamical map.

The first definition was proposed by Rivas, Huelga and Plenio \cite{RHP10} and it is based on the concept of divisibility of dynamics.

\begin{definition}\label{RHP}
Consider a dynamical map $\left\{\Lambda_t\right\}_t$ acting between $M_d(\mathbb{C})$. We say that $\left\{\Lambda_t\right\}_t$  is CP-divisible or RHP-Markovian if for all $t,s$ such that $t\geq s$ there exists a CPTP map $V_{t,s}$ (so called propagator) for which
\begin{equation}\nonumber
\Lambda_t=V_{t,s}\Lambda_s.
\end{equation}
\end{definition}Such approach is justify by memoryless environment interpretation \cite{CRS18}. In particular any dynamical map arising from differential equation
\begin{equation}\label{GKLS}
\frac{d\Lambda_t}{dt}=\mathcal{L}_t\Lambda_t, \ \Lambda_0=\mathrm{Id}
\end{equation}with a time-dependent Gorini-Kossakowski-Lindblad-Sudarshan (GKLS) generator $\mathcal{L}_t$ \cite{GKS76,L76} is CP-divisible.

The second broadly accepted approach to quantum Markovianity (postulated by Breuer, Laine and Piilo \cite{BLP09}) is based on the idea of information backflow in which increase in distinguishability of states is an indicator of a flow of information form environment to the system and is interpreted as a quantum memory effect.
\begin{definition}\label{BLP}
Consider a dynamical map $\left\{\Lambda_t\right\}_t$ acting between $M_d(\mathbb{C})$. We say that $\left\{\Lambda_t\right\}_t$ is BLP-Markovian (i.e. there is no information backflow) if the following
\begin{equation}\nonumber
\frac{d}{dt}\left\|\Lambda_t(\rho_1-\rho_2)\right\|_1\leq 0
\end{equation}holds for any pair of states $\rho_1,\rho_2\in M_d(\mathbb{C})_+$ (here $\left\|.\right\|_1$ denotes the trace norm).
\end{definition}Lack of backflow has a clear operational meaning - it states that there is no memory effect in a quantum process if the probability that two states (after evolution according to dynamical map) can be distinguished is not increasing in time.

Since trace preserving map is positive if and only if it is a contraction with respect to the trace norm \cite{P03}, CP-divisibility implies no backflow of information, but the converse implication is generally not true, even in the case of classical dynamics \cite{CKR11} (notice that also stronger version of BLP-Markovianity and its relation with divisibility was discussed in the literature \cite{CKR11,CM14}). Therefore, we adapt definition \ref{BLP} and the concept of information backflow as a fundamental notion and starting point for our further considerations. 

\textit{Backflow of a classical type.-} Let us consider a probability vector $\vec{p}=(p_1, p_2,\ldots ,p_d)$ describing some classical d-level system. Classical stochastic dynamics in that setting is given by the time-dependent ($t\geq 0$) stochastic matrix $\Lambda(t)$ (i.e. $\Lambda(t)\in M_d(\mathcal{C})$, $\Lambda(t)_{ij}\geq 0$ for all $i,j$ and $\sum_{i=1}^d\Lambda_{ij}(t)=1$ for all $j$) via equation $\vec{p}(t)=\Lambda(t)\vec{p}$. Notice that choosing a particular orthonormal basis $B$ in $\mathbb{C}^d$, one can treat quantum state $\rho=\sum_i^dp_i\left|e_i\right\rangle\left\langle e_i\right|$ diagonal with respect to that basis as a probability vector $\vec{p}$ describing classical system. If so then classical dynamics described above can be expressed in the quantum dynamical setting \cite{CKR11} by a dyamical map given as
\begin{equation}\label{map_classical}
\Lambda^{cl}_t(\rho)=\sum_{i,j=1}^d \Lambda_{ij}(t)\left|e_i\right\rangle\left\langle e_j\right|\rho\left|e_j\right\rangle\left\langle e_i\right|
\end{equation}where $\Lambda(t)$ is some time-dependent stochastic matrix. Allowing for action of time-dependent unitary we obtain a generalization of classical dynamics
\begin{equation}\label{map_2}
\Lambda^{gcl}_t=U(t)\Lambda^{cl}_tU^{\dagger}(t). 
\end{equation}which up to unitary may be seen as a classical stochastic dynamics.

\begin{example}\label{3classical}
Consider a time-dependent family of qubit depolarizing channels
\begin{equation}\label{dep}
\Lambda_t(\rho)=\lambda(t)\rho+\frac{1-\lambda(t)}{2}\mathbb{I}
\end{equation}with 
\begin{equation}\nonumber
\lambda(t)= \begin{cases} \frac{4+\epsilon}{6t_0^2}\left(t-t_0\right)^2+\frac{2-\epsilon}{6} &\mbox{if } \ 0\leq t< t_0 \\ 
\frac{1}{6}+\frac{1-\epsilon}{6}\cos (t-t_0) & \mbox{if }\ t_0\leq t \end{cases}
\end{equation}where $\epsilon$ is some arbitrarily small positive number. 

Observe that (\ref{dep}) is governed by equation \ref{GKLS} with $\mathcal{L}_t=\sum_{i=1}^3\gamma(t)\frac{1}{2}\left(\sigma_i\rho\sigma_i-\rho\right)$. Because $\gamma(t)\geq 0$ for $\left[0,t_0\right)$, discussed dynamics is CP-divisible on that interval. Notice that on the contrary, for $\left[t_0,\infty\right)$ one can observe information backflow. On the other hand, for any $t\in \left[t_0,\infty\right)$, $\Lambda_t$ can be expressed by a time-independent convex combination
$\Lambda_t=\frac{1}{3}\left(\Lambda_t^{(1)}+\Lambda_t^{(2)}+\Lambda_t^{(3)}\right)$ where any $\Lambda_t^{(k)}$ is classical dynamical map (with respect to the basis consisting of eigenvectors of $\sigma_k$) of the form \ref{map_classical}, governed by the same bistochastic time-dependent matrix $\Lambda(t)$ such that $\Lambda_{11}(t)=\frac{3}{4}+\frac{1-\epsilon}{4}\cos\left(t-t_0\right)$.
\end{example}

Notice that in the case of previous example, any possible backflow of information can be simulated by classical dynamics which has been chosen with some probability. In particular observed backflow should not be treated as a quantum phenomenon.

\textit{Elementary dynamical maps.-} Discussion in the previous section shows that one should be careful while talking about memory and its character in quantum dynamical setting. 
Somewhat similar objections towards presence of memory effects for dynamical maps which are not CP-divisible (but have no information backflow) were described in \cite{MCPS17}. Here we want to address issue of classicality of observed memory (see also recent results in \cite{METTPSH19}).

Consider two different states $\rho_1, \rho_2\in M_d(\mathbb{C})_{+}$ and assume that they are distinguishable with respect to von Neumann measurements (i.e. projective measurements) in some fixed orthonormal basis $B$ in $\mathbb{C}^d$. Difference in diagonals of this states can be seen as an information encoded in a classical basis, therefore possible increase in distinguishability of such pair of states during some evolution should be treated as an effect of classical memory. This motivates the following definition.
\begin{definition}\label{elementary_qm}
Let $B$ be an orthonormal basis in $\mathbb{C}^d$. Consider a dynamical map $\left\{\Lambda_t\right\}_t$ acting between $M_d(\mathbb{C})$. We say that $\Lambda_t$ is elementary (with respect to $B$) if 
\begin{equation}\nonumber
\frac{d}{dt}\left\|\Lambda_t(\rho_1-\rho_2)\right\|_1\leq 0
\end{equation}for any pair of states $\rho_1,\rho_2\in M_d(\mathbb{C})_+$ that are indistinguishable by von Neumann measurements in $B$.
\end{definition}

Recall that a CPTP map $\Omega$ belongs to the set of dephasing-covariant incoherent
operations (DIO) related to given basis $B$ \cite{SAP17,CG16,MS16}, when it commutes with the dephasing map, i.e. $\Omega(\Delta(\rho))=\Delta(\Omega(\rho))$ where $\Delta(\rho)=\sum_i\left|e_i\right\rangle\left\langle e_i\right|\rho\left|e_i\right\rangle\left\langle e_i\right|$.

\begin{proposition}Let dynamical map $\left\{\Lambda_t\right\}_t$ be elementary with respect to basis $B$. If $\Omega$ is DIO with respect to the same basis, then $\left\{\Lambda_t\circ \Omega\right\}_t$ is elementary as well.
\end{proposition}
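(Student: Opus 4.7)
The plan is to reduce the claim directly to the elementarity assumption on $\{\Lambda_t\}_t$ by showing that the DIO map $\Omega$ sends pairs of states that are indistinguishable by von Neumann measurements in $B$ to pairs that are again indistinguishable by von Neumann measurements in $B$. Once this is established, applying Definition~\ref{elementary_qm} to the image pair finishes the argument.

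More concretely, fix a pair of states $\rho_1,\rho_2\in M_d(\mathbb{C})_+$ that are indistinguishable by projective measurements in $B$; by the definition of $\Delta$ this means $\Delta(\rho_1)=\Delta(\rho_2)$, i.e.\ $\Delta(\rho_1-\rho_2)=0$. Set $\sigma_i:=\Omega(\rho_i)$. Since $\Omega$ is CPTP, both $\sigma_1$ and $\sigma_2$ are legitimate density matrices. Using linearity of $\Omega$ together with the defining intertwining relation $\Omega\circ\Delta=\Delta\circ\Omega$ of a DIO channel, I obtain
\begin{equation}\nonumber
\Delta(\sigma_1-\sigma_2)=\Delta\bigl(\Omega(\rho_1-\rho_2)\bigr)=\Omega\bigl(\Delta(\rho_1-\rho_2)\bigr)=\Omega(0)=0,
\end{equation}
so $\sigma_1$ and $\sigma_2$ are themselves indistinguishable by von Neumann measurements in $B$.

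Now I invoke the elementarity of $\{\Lambda_t\}_t$ with respect to $B$, applied to the pair $(\sigma_1,\sigma_2)$, to conclude
\begin{equation}\nonumber
\tfrac{d}{dt}\bigl\|(\Lambda_t\circ\Omega)(\rho_1-\rho_2)\bigr\|_1=\tfrac{d}{dt}\bigl\|\Lambda_t(\sigma_1-\sigma_2)\bigr\|_1\leq 0.
\end{equation}
Since $(\rho_1,\rho_2)$ was an arbitrary pair indistinguishable in $B$, this is exactly the condition in Definition~\ref{elementary_qm} for $\{\Lambda_t\circ\Omega\}_t$.

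There is not really a substantive obstacle here; the proof is essentially a bookkeeping exercise. The only point worth being careful about is that the relation $\Omega\circ\Delta=\Delta\circ\Omega$ is stated for states, but I apply it to the Hermitian traceless operator $\rho_1-\rho_2$, which is fine because both $\Omega$ and $\Delta$ are linear maps on $M_d(\mathbb{C})$. Everything else follows immediately from the definitions.
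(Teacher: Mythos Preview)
Your proof is correct and follows essentially the same approach as the paper: both arguments use the DIO commutation relation $\Omega\circ\Delta=\Delta\circ\Omega$ to show that $\sigma_i=\Omega(\rho_i)$ remain indistinguishable in $B$, and then apply elementarity of $\{\Lambda_t\}_t$ to the pair $(\sigma_1,\sigma_2)$. Your additional remark about extending the commutation relation by linearity to $\rho_1-\rho_2$ is a welcome clarification but does not change the substance.
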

\begin{proof}Let $\rho_1,\rho_2$ be a pair of states indistinguishable by von Neumann measurements in basis $B$. Then $\Omega(\Delta(\rho_1-\rho_2))=0=\Delta(\Omega(\rho_1-\rho_2))$. Therefore $\sigma_i=\Omega(\rho_i)$ for $i=1,2$ are once more indistinguishable by von Neumann measurements in basis $B$ from which follows 
\begin{equation}\nonumber
\frac{d}{dt}\left\|(\Lambda_t\circ \Omega)(\rho_1-\rho_2)\right\|_1=\frac{d}{dt}\left\|\Lambda_t(\sigma_1-\sigma_2)\right\|_1\leq 0.
\end{equation}
\end{proof}

Let us now reformulate condition from definition \ref{elementary_qm} in the case of a qubit dynamics. Any CPTP map $\Lambda:M_2(\mathbb{C})\rightarrow M_2(\mathbb{C})$ can be described by a matrix (expressed with respect to the basis $\left\{\mathbb{I},\sigma_1,\sigma_2,\sigma_3\right\}$)
\begin{equation}\label{form0}
\Lambda=\begin{bmatrix}
    1   & \vec{0} \\
    \vec{r}  &T 
\end{bmatrix}
\end{equation}with real vector $\vec{r}$ and real $3$ by $3$ matrix $T$ fulfilling additional conditions \cite{RSW02} assuring complete positivity of $\Lambda$. Dynamical map $\left\{\Lambda_t\right\}_t$ for a qubit system is then given by a time-dependent family of matrices (\ref{form0}). Observe that $\left\|\Lambda_t(\rho_1-\rho_2)\right\|_1=|T(t)(\vec{m}_1-\vec{m}_2)|$ for any two states $\rho_i=\frac{1}{2}(\mathbb{I}+\vec{m_i}\vec{\sigma})$. Therefore, dynamical map $\left\{\Lambda_t\right\}_t$ is elementary with respect to the basis of eigenvectors of $\vec{n}\vec{\sigma}$ (any basis is of this form) if and only if 
\begin{equation}\label{condition0}
\forall_{\vec{m}\perp\vec{n}}\ \ \frac{d}{dt}|T(t)\vec{m}|\leq 0.
\end{equation}Note that condition (\ref{condition0}) depends only on the $T(t)$ part of matrix (\ref{form0}). Observe that in condition (\ref{condition0}) one can exchange $|T(t)\vec{m}|$ with its square and derive equivalent formula 
\begin{equation}
\forall_{\vec{m}\perp \vec{n}}\ \  \left(\vec{m},X(t)\vec{m}\right) \leq 0
\end{equation}where $X(t)=\frac{d}{dt}T^T(t)T(t)$.

Now choose orthogonal matrix $O$ such that $O\vec{m}=(m_1,m_2,0)^T=\vec{m}_I$ for any $\vec{m}\perp\vec{n}$ and denote $T_I(t)=T(t)O^T$. Observe that $T_{II}(t)\vec{m}_I=T(t)\vec{m}$ where $T_{II}(t)$ is equal to $T_I(t)$ with last column exchange with zero vector. By SDV decomposition with two orthogonal matrices $O_I, O_{II}$ we obtain $T_{II}(t)=O_I(t)\tilde{T}(t)O_{II}(t)$ where
\begin{equation}\nonumber
\tilde{T}(t)=\begin{bmatrix}
    \lambda_1(t)  &0   & 0   \\ 
		0 & \lambda_2(t) & 0              \\ 
		0 &0 &0
\end{bmatrix}, \ O_{II}(t)=\begin{bmatrix}
    o_{11}(t)  &o_{12}(t)  & 0   \\ 
		o_{21}(t) & o_{22}(t) & 0              \\ 
		0 &0 &1
\end{bmatrix}.
\end{equation}Since $O_{I}(t)$ does not change norm of a vector, then by (\ref{condition0}) the necessary condition for $T(t)$ to describe elementary dynamical map is given by inequality
\begin{equation}\label{ineq1}
\mathrm{max}\left\{\lambda_1(t),\lambda_2(t)\right\}\leq \mathrm{max}\left\{\lambda_1(s),\lambda_2(s)\right\}
\end{equation}satisfied for any $t>s$, while the sufficient one is described by inequality
\begin{equation}\label{ineq2}
\mathrm{max}\left\{\lambda_1(t),\lambda_2(t)\right\}\leq \mathrm{min}\left\{\lambda_1(s),\lambda_2(s)\right\}
\end{equation}satisfied for any $t>s$. When (\ref{ineq1}) is satisfied but (\ref{ineq2}) is not, $T(t)$ may still describe elementary dynamical map depending on how fast $O_{II}(t)$ change with time in comparison to change of $\lambda_1(t)$ and $\lambda_2(t)$.

Let $S(t)$ describes a notrivial part of $\tilde{T}(t)O_{II}(t)$, i.e.
\begin{equation}\nonumber
S(t)=\begin{bmatrix}
   \lambda_1(t) o_{11}(t)  & \lambda_1(t) o_{12}(t)\\ 
		\lambda_2(t) o_{21}(t) & \lambda_2(t)o_{22}(t)
\end{bmatrix}.
\end{equation}Condition (\ref{condition0}) may be reformulated in the following form
\begin{equation}\nonumber
\forall_{\vec{m}(\alpha)}\ \ \frac{d}{dt}|S(t)\vec{m}(\alpha)|^2\leq 0
\end{equation}where $\vec{m}(\alpha)=(\cos\alpha,\sin\alpha)^T$. By direct differentiation this can be equivalently stated as  
\begin{equation}\label{opti}
\forall_{\vec{m}(\alpha)}\ \  \left(\vec{m}(\alpha),\tilde{S}(t)\vec{m}(\alpha)\right) \leq 0
\end{equation}with $\tilde{S}(t)=\frac{d}{dt}S^T(t)S(t)$. Maximization of (\ref{opti}) over angle $\alpha$ leads to 
\begin{equation}\label{condition2}
\tilde{S}_{11}+\tilde{S}_{22}+\sqrt{(\tilde{S}_{11}-\tilde{S}_{22})^2+(\tilde{S}_{12}+\tilde{S}_{21})^2}\leq 0
\end{equation}where all $\tilde{S}_{ij}=\tilde{S}_{ij}(t)$ are time-dependent. Therefore, we arrive at full characterization of qubit elementary dynamical maps.

\begin{proposition}Qubit dynamical map $\left\{\Lambda_t\right\}_t$ is elementary with respect to the basis of eigenvectors of $\vec{n}\vec{\sigma}$ if and only if $\tilde{S}(t)$ satisfies condition (\ref{condition2})
\end{proposition}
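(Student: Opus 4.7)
The plan is to consolidate the chain of equivalent reformulations already unfolded above into a single \emph{iff} argument, with the only analytic content being a standard trigonometric maximization of a $2\times 2$ quadratic form.

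First I would make the reduction to a planar quadratic form precise. By definition \ref{elementary_qm}, elementarity in the eigenbasis of $\vec{n}\vec{\sigma}$ demands $\tfrac{d}{dt}\|\Lambda_t(\rho_1-\rho_2)\|_1\leq 0$ whenever the Bloch vectors of $\rho_1$ and $\rho_2$ differ by a vector orthogonal to $\vec{n}$. Combining $\|\Lambda_t(\rho_1-\rho_2)\|_1 = |T(t)(\vec{m}_1-\vec{m}_2)|$ with the rotation $O$ sending $\vec{n}$ to $(0,0,1)^T$ and the SVD-type factorization $T(t)O^T = O_I(t)\tilde{T}(t)O_{II}(t)$ carried out above, the elementarity condition becomes $\tfrac{d}{dt}|S(t)\vec{m}(\alpha)|^2 \leq 0$ for every $\alpha$, with $\vec{m}(\alpha)=(\cos\alpha,\sin\alpha)^T$. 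Here I would pause to check that $O_I(t)$ legitimately drops out \emph{under} the time derivative and not merely inside the norm at a fixed instant: since $O_I(t)^T O_I(t) = I$ at every $t$, its derivative vanishes, so $\tfrac{d}{dt}[S^T O_I^T O_I S] = \tfrac{d}{dt}[S^T S]$ and no contribution from the left orthogonal factor survives.

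Next I would expand the derivative to arrive at (\ref{opti}), namely $\vec{m}(\alpha)^T \tilde{S}(t) \vec{m}(\alpha) \leq 0$ with $\tilde{S}(t)$ manifestly symmetric. Inserting $\vec{m}(\alpha)$ and applying the double-angle identities turns the quadratic form into
\begin{equation}\nonumber
\tfrac{1}{2}(\tilde{S}_{11}+\tilde{S}_{22})+\tfrac{1}{2}(\tilde{S}_{11}-\tilde{S}_{22})\cos 2\alpha + \tilde{S}_{12}\sin 2\alpha.
\end{equation}
The sharp bound $a\cos\theta + b\sin\theta \leq \sqrt{a^2+b^2}$ computes the supremum over $\alpha$, attained at a specific angle. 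Requiring this supremum to be non-positive, multiplying through by two, and using the symmetry identity $2\tilde{S}_{12}=\tilde{S}_{12}+\tilde{S}_{21}$ reproduces inequality (\ref{condition2}); tightness of the bound yields the reverse implication at no extra cost, giving the \emph{iff}.

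The only step I expect to require genuine care is the SVD bookkeeping in the first paragraph: tracking the time-dependent rotations $O_I(t),O_{II}(t)$ and justifying that the entire third column of $T_I(t)$ may be set to zero without loss, which is exactly the role of the identity $T_{II}(t)\vec{m}_I = T(t)\vec{m}$ for $\vec{m}\perp\vec{n}$. Once these items are nailed down, the maximization is one line of trigonometry and the proposition follows.
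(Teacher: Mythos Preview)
Your proposal is correct and follows essentially the same route as the paper, which presents the argument as the running discussion preceding the proposition rather than as a separate proof: reduction via the Bloch representation to condition (\ref{condition0}), rotation by $O$ and SVD of $T_{II}(t)$ to pass to the $2\times 2$ block $S(t)$, and finally the trigonometric maximization of the quadratic form to obtain (\ref{condition2}). Your explicit verification that the time-dependent orthogonal factor $O_I(t)$ drops out under the derivative is a welcome bit of extra care, but otherwise the two arguments coincide.
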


In a search for appropriate notion of lack of quantum memory, one may consider other subclasses of elementary dynamical maps which more and more resemble classical dynamics (see schematic figure \ref{fig1}).
\begin{definition}\label{def2}Let $B$ be an orthonormal basis in $\mathbb{C}^d$. Consider a dynamical map $\left\{\Lambda_t\right\}_t$ acting between $M_d(\mathbb{C})$. We say that $\left\{\Lambda_t\right\}_t$ is block-diagonal elementary (with respect to $B$) if it is elementary and there exist a time-dependent unitary $U(t)$ such that $U(t)\Lambda_t U(t)^{\dagger}$ is block-diagonal with respect to the direct sum decomposition
\begin{equation}\nonumber
M_d(\mathbb{C})=M_d(\mathbb{C})_{\mathrm{diag}}\bigoplus M_d(\mathbb{C})_{\mathrm{off}}
\end{equation}where subscript diag and off refers to diagonal and off-diagonal part (for a given basis $B$) respectively. If $U(t)$ is such that only 
\begin{equation}\nonumber
U(t)\Lambda_t\left(M_d(\mathbb{C})_{\mathrm{diag}}\right)U(t)^{\dagger}\subset M_d(\mathbb{C})_{\mathrm{diag}}
\end{equation}is satisfied, then we say that $\left\{\Lambda_t\right\}_t$ is diagonal elementary (with respect to $B$).
\end{definition}

In a simplest qubit case dynamical maps of this type can be completely characterized by coherence properties.
\begin{proposition}
Let $B$ be an orthonormal basis in $\mathbb{C}^2$ and let $\left\{\Lambda_t\right\}_t$ be a qubit dynamical map such that $U(t)\Lambda_t(\rho)U(t)^{\dagger}$ is block-diagonal for some unitary $U(t)$. Then $\left\{\Lambda_t\right\}_t$ is elementary if and only if a $l_1$-coherence measure $C_{l_1}(U(t)\Lambda_t(\rho)U(t)^{\dagger})$ is time non-increasing for any state $\rho \in M_2(\mathbb{C})_{+}$.
\end{proposition}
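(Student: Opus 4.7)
The plan is to reduce both sides of the equivalence to the same condition on a single $2\times 2$ block extracted from $\Lambda_t$. Fix coordinates so that $B$ is the eigenbasis of $\sigma_3$ (no loss of generality), so that the elementary criterion (\ref{condition0}) reads $\frac{d}{dt}|T_\Lambda(t)\vec{m}|\leq 0$ for every $\vec{m}$ in the $xy$-plane. Setting $V_t(\rho):=U(t)\Lambda_t(\rho)U(t)^\dagger$ and using the identity $U\vec{v}\vec{\sigma}U^\dagger=(R_U\vec{v})\vec{\sigma}$ for the unique $SO(3)$ rotation $R_U$ associated to $U\in SU(2)$, the parameters of $V_t$ in the form (\ref{form0}) satisfy $T_V(t)=R_{U(t)}T_\Lambda(t)$ and $\vec{r}_V(t)=R_{U(t)}\vec{r}_\Lambda(t)$.

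The first substantive step is to unpack block-diagonality of $V_t$ with respect to $M_2(\mathbb{C})_{\mathrm{diag}}\oplus M_2(\mathbb{C})_{\mathrm{off}}=\mathrm{span}\{\mathbb{I},\sigma_3\}\oplus\mathrm{span}\{\sigma_1,\sigma_2\}$ in Bloch coordinates. Evaluating $V_t$ on these four basis elements forces
\begin{equation}\nonumber
T_V(t)=\begin{pmatrix} A(t) & \vec{0} \\ \vec{0}^{\,T} & \mu(t) \end{pmatrix}, \qquad \vec{r}_V(t)=(0,0,r_3(t))^T,
\end{equation}
for some $2\times 2$ matrix $A(t)$ and scalars $\mu(t),r_3(t)$. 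For a state $\rho=\frac{1}{2}(\mathbb{I}+(\vec{m}_\perp,m_3)\vec{\sigma})$ the output Bloch vector of $V_t(\rho)$ reads $(A(t)\vec{m}_\perp,\,\mu(t)m_3+r_3(t))$, hence in basis $B$ one has $C_{l_1}(V_t(\rho))=|A(t)\vec{m}_\perp|$. On the other hand, orthogonality of $R_{U(t)}$ gives $|T_\Lambda(t)\vec{m}|=|T_V(t)\vec{m}|$ for every $\vec{m}$, and for $\vec{m}=(\vec{m}_\perp,0)^T$ the block structure yields $T_V(t)\vec{m}=(A(t)\vec{m}_\perp,0)^T$, so $|T_\Lambda(t)\vec{m}|=|A(t)\vec{m}_\perp|$ as well.

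Both conditions thus collapse to the single requirement $\frac{d}{dt}|A(t)\vec{m}_\perp|\leq 0$ for every $\vec{m}_\perp\in\mathbb{R}^2$; the passage from transverse Bloch vectors of valid states (lying in the unit disk) to the whole plane uses the positive homogeneity $|A(t)(\lambda\vec{m}_\perp)|=\lambda|A(t)\vec{m}_\perp|$. The equivalence claimed in the proposition follows at once. The only real point requiring care is the affine bookkeeping guaranteeing that the longitudinal pieces $\mu(t)m_3$ and $r_3(t)$ genuinely decouple from the transverse block, so that coherence depends solely on $A(t)\vec{m}_\perp$; beyond that, the unitary $U(t)$ plays no role on the elementary side, because its associated $SO(3)$ rotation preserves Euclidean norms.
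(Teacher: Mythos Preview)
Your argument is correct and rests on the same core identification as the paper's proof: under block-diagonality, the $l_1$-coherence of $U(t)\Lambda_t(\rho)U(t)^\dagger$ coincides, up to the unitary conjugation (which is trace-norm preserving), with $\|\Lambda_t(\rho_1-\rho_2)\|_1$ for a pair of states sharing the same diagonal. The only genuine difference is one of language: the paper works operator-theoretically, invoking the qubit identity $C_{l_1}(\rho)=\|\rho_{\mathrm{off}}\|_1$ and the decomposition $\rho=\rho_{\mathrm{diag}}+\rho_{\mathrm{off}}$ directly, whereas you translate everything into Bloch coordinates and use the already-derived criterion (\ref{condition0}) plus the $SU(2)\to SO(3)$ correspondence. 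Your route is more explicit (the block $A(t)$ is exhibited, and the decoupling of the longitudinal part is seen coordinate-wise), at the cost of some bookkeeping; the paper's route is shorter and basis-free but leaves the ``for some $\alpha\geq 0$ and some $\rho_1,\rho_2$'' step implicit. Substantively the two arguments are the same.
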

\begin{proof}Let $\tilde{\Lambda}_t=U(t)\Lambda_tU(t)^{\dagger}$. Note that for qubits $C_{l_1}(\rho)=\sum_{i\neq j}\left|\rho_{ij}\right|=\inf_{\sigma\in \mathrm{diag}(B)}\left\|\rho-\sigma\right\|_1=\left\|\rho_{\mathrm{off}}\right\|_1$ and 
\begin{eqnarray}
  C\left(\tilde{\Lambda}_t(\rho)\right) &=& \left\|\tilde{\Lambda}_t(\rho_{\mathrm{off}})_{\mathrm{off}}+\tilde{\Lambda}_t(\rho_{\mathrm{diag}})_{\mathrm{off}}\right\|_1 \nonumber \\
   &=& \left\|\tilde{\Lambda}_t(\rho_{\mathrm{off}})\right\|_1=\left\|\alpha\tilde{\Lambda}_t(\rho_1-\rho_2)\right\|_1 \nonumber \\
	&=&\left\|\alpha\Lambda_t(\rho_1-\rho_2)\right\|_1 \nonumber 
\end{eqnarray}where $\alpha\geq 0$ and $\rho_1, \rho_2$ are some states indistinguishable by von Neumann measurements in basis $B$.
\end{proof}\newline

\begin{figure}[H]
\includegraphics[width=0.45\textwidth]{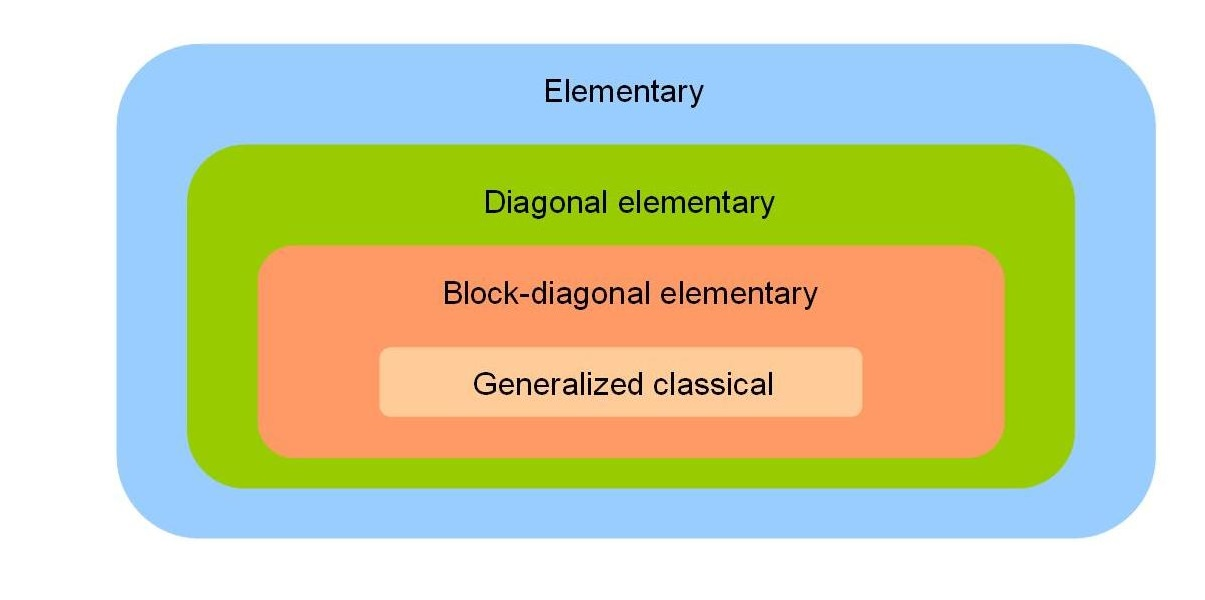}
\caption{\label{fig1}Relations between classes of dynamical maps for a given basis $B$.}
\end{figure}

Finally, observe that trivially all generalized classical (\ref{map_2}) dynamical maps are block-diagonal elementary (see schematic figure \ref{fig1}). Notice that generalized classical dynamics (\ref{map_2}) and dynamical maps from definition \ref{def2} are expressed up to a local unitary as this operation can always be applied locally to the system and it is not related to environment.

\textit{Quantum memory.-} In order to further generalize content of definition \ref{elementary_qm}, examine the following example.

\begin{example}\label{pauli}Consider a qubit dynamical map 
\begin{equation}
\Lambda_t(\rho)=\frac{1}{2}(\mathbb{I}+\lambda_1(t)n_1\sigma_1+\lambda_2(t)n_2\sigma_2 +\lambda_3(t)n_3\sigma_3)
\end{equation}with
\begin{equation}\nonumber
\lambda_i(t)=e^{-\Gamma_j(t)-\Gamma_k(t)}, \ \ \Gamma_i(t)=\int_{0}^t\gamma_i(\tau)d\tau,
\end{equation}where $\left(i,j,k\right)$ stands for possible cyclic permutations of $\left\{1,2,3\right\}$. Define three dynamical maps $\left\{\Lambda^{(k)}_{t}\right\}_t$ for $k=1,2,3$ by $\gamma^{(k)}_i=(1-\delta_{ki})\gamma_a+\delta_{ki}\gamma_b$ where
\begin{equation}\nonumber
\gamma_a(\tau) = 2\tau^2-6\tau+4,
\end{equation}
\begin{equation}\nonumber
\gamma_b(\tau)= \begin{cases} 0 &\mbox{if } \tau\notin \left[1,2\right] \\ 
-\gamma_a(\tau)+\epsilon & \mbox{if } \tau\in \left[1,2\right] \end{cases}. 
\end{equation}with some $0<\epsilon$. Consider a convex combination of $\left\{\Lambda^{(k)}_t\right\}_t$ with equal coefficients
\begin{equation}\label{real}
\Lambda_t(\rho)=\lambda(t)\rho+\frac{1-\lambda(t)}{2}\mathbb{I}
\end{equation}where
\begin{equation}\nonumber
\lambda(t)=\frac{1}{3}\left(\lambda_1^{(1)}(t)+2\lambda^{(1)}_2(t)\right). 
\end{equation}Observe that for
\begin{equation}\nonumber
0<\epsilon < -\ln\left(\frac{1}{2}e^{\frac{5}{3}}\left(e^{-\frac{10}{3}}+2e^{-\frac{5}{3}}-e^{-\frac{8}{3}}\right)\right)\approx 0,09
\end{equation} we get $\lambda(1)<\lambda(2)$ so that $\left\{\Lambda_t\right\}_t$ cannot be elementary, but by construction it belongs to the set of convex combinations of elementary dynamical maps $\left\{\Lambda^{(k)}_t\right\}_t$.

\end{example}

Dynamical map (\ref{real}) can be understood in the terms of choosing single elementary dynamics according to some probability distribution - one cannot restrict notion of dynamical maps without quantum memory only to elementary map.
\begin{definition}\label{qm}
Consider a dynamical map $\left\{\Lambda_t\right\}_t$ acting between $M_d(\mathbb{C})$. We say that there is no quantum information backlflow (in a strong sense) on $\left[t_1,t_2\right]$ if 
\begin{equation}\nonumber
\Lambda_t=\sum_ip_i \Lambda_t^i
\end{equation}is a time-independent convex combination where $t\in \left[t_1,t_2\right]$ and each $\Lambda^i_t$ is elementary with respect to some basis $B_i$ in $\mathbb{C}^d$. 
\end{definition} Notice that in particular dynamical maps which are RHP-Markovian and BPL-Markovian satisfy the above definition as in that case there is no backflow at all (see schematic figure \ref{fig2}). Starting form subfamilies of elementary maps, one can consider subclasses of dynamical maps without quantum information backflow (see schematic figure \ref{fig2}).

\begin{definition}\label{qm_0}
Consider a dynamical map $\left\{\Lambda_t\right\}_t$ acting between $M_d(\mathbb{C})$ that can be expressed as a time-independent convex combination $\Lambda_t=\sum_ip_i \Lambda_t^i$ where $t\in \left[t_1,t_2\right]$. If each $\Lambda^i_t$ is generalized classical (\ref{map_2}) with respect to some basis $B_i$, we say that there $\left\{\Lambda_t\right\}_t$ is of type 0 on $\left[t_1,t_2\right]$. If 
each $\Lambda^i_t$ is block-diagonal (diagonal) elementary with respect to some basis $B_i$, we say $\left\{\Lambda_t\right\}_t$ is of type I (type II) on $\left[t_1,t_2\right]$
\end{definition}

\begin{figure}[H]
\includegraphics[width=0.45\textwidth]{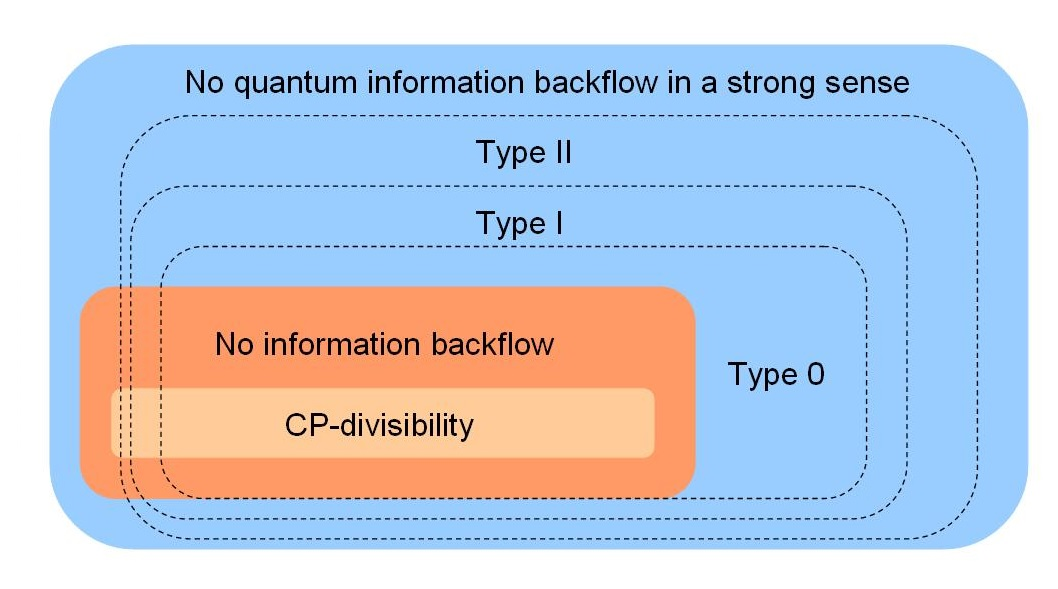}
\caption{\label{fig2}Relations between notions of lack of memory.}
\end{figure}

\begin{remark}\label{rm} We will say that dynamical map has quantum information backflow in a weak sense if it has information backflow (it is not BLP-Markovian) and is not of type 0 - in that case backflow may be simulated by probabilistic choice of classical evolution.
\end{remark}

In order to show that not all quantum processes have no quantum memory according to proposed definition \ref{qm}, consider a dynamical map $\left\{\Lambda_t\right\}_t$ such that both $\Lambda_w$ and $\Lambda_s$ are extremal (in the set of all CPTP maps) for some $w>s$. Now let us assume that for any basis $B$, there are states $\rho_1,\rho_2$ indistinguishable with respect to $B$ such that 
\begin{equation}\label{condition}
\left\|\Lambda_w(\rho_1-\rho_2)\right\|_1>\left\|\Lambda_s(\rho_1-\rho_2)\right\|_1
\end{equation}holds. Then $\left\{\Lambda_t\right\}_t$ has quantum information backflow (in a strong sense). Indeed, if $\Lambda_t$ can be expressed as a convex combination $\Lambda_t=\sum_{i=1}^np_i \Lambda_t^i$, then for $t=w,s$ we get $\Lambda_t^i=\Lambda_t$ for any $i=1,2,\ldots, n$ and by (\ref{condition}) each $\left\{\Lambda^i_t\right\}_t$ cannot be elementary.

\begin{example}\label{no_quantum}
 Consider a qubit dynamical map $\left\{\Lambda_t\right\}_t$ given by $\Lambda_t(\rho)=\frac{1}{2}(\mathbb{I}+\lambda_1(t)n_1\sigma_1+\lambda_2(t)n_2\sigma_2 +\left(\lambda_3(t)n_3+r_3(t)\right)\sigma_3)$ with $\lambda_1(t)=\lambda_2(t)=\frac{1}{2}+\frac{1}{4}\sin t$, $\lambda_3(t)=\lambda_1(t)\lambda_2(t)$ and $r_3(t)=1-\lambda_3(t)$. Observe that for any vector $\vec{m}$ and for any $t,s\in \left(0,\frac{\pi}{2}\right)$ such that $t>s$ we get $\left|T(w)\vec{m}\right|>\left|T(s)\vec{m}\right|$. Moreover, by description of extreme points of the set of CPTP maps (presented in \cite{RSW02}) one can see that for any $t\in \left(0,\frac{\pi}{2}\right)$ map $\Lambda_t$ is an extreme point, thus by previous discussion there is quantum information backflow in a strong sense.
\end{example}

Finally, let a dynamical map $\left\{\Lambda^0_t\right\}_t$ be like above (i.e. let (\ref{condition}) be fulfilled). For $i=1,2,\ldots, n$ consider any dynamical maps $\left\{\Lambda^i_t\right\}_t$. Then a time-dependent convex combination $\Lambda_t=\sum_{i=0}^np_i(t) \Lambda_t^i$ such that $p_0(s)=p_0(w)=1$ has quantum information backflow (in a strong sense) as well.\newline

Above remarks show that there is plenty room for examples of processes with truly quantum memory. Still the main remaining task is to find an applicable procedure which can certify that a given dynamical map cannot be decompose as a convex combinations of dynamical maps which are elementary (on a considered interval). This general problem seems to be difficult. However, one can in principle check that a given dynamical map is not of type 0.

Let $W\in M_d(\mathbb{C})\otimes M_d(\mathbb{C})$ be a Hermitian operator. We say that $W$ is c-c channel states witness if $\mathrm{Tr}(W\rho)\geq 0$ for any state $\rho$ of the form (\ref{c-c state}). State $\rho$ is in a set of convex combinations of c-c channel states if and only if $\mathrm{Tr}(W\rho)\geq 0$ for all c-c channel states witnesses $W$.

\begin{proposition}
Let $\left\{\Lambda_t\right\}_t$ be a dynamical map. If there is $t\in[t_1,t_2]$ such that Choi-Jamio{\l}kowski state related to $\Lambda_t$ is not given by a convex combinations of c-c channel states, then $\left\{\Lambda_t\right\}_t$ is not of type 0 on $[t_1,t_2]$. 
\end{proposition}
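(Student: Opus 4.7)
The plan is to prove the contrapositive: assume $\left\{\Lambda_t\right\}_t$ is of type $0$ on $[t_1,t_2]$ and deduce that, for every $t$ in that interval, the Choi--Jamio{\l}kowski state of $\Lambda_t$ sits in the convex hull of c-c channel states. The key engine is linearity of the Choi--Jamio{\l}kowski isomorphism combined with a structural identification of the Choi state of a generalized classical map.

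\textbf{Step 1 (unpack type 0 and apply linearity).} By Definition \ref{qm_0}, a type 0 map admits a \emph{time-independent} convex decomposition $\Lambda_t=\sum_i p_i \Lambda_t^i$ where each $\Lambda_t^i$ is generalized classical with respect to some fixed basis $B_i$. Since $J$ is a linear bijection from CPTP maps to Choi states, $J(\Lambda_t)=\sum_i p_i J(\Lambda_t^i)$ for every $t\in[t_1,t_2]$. So it suffices to show that for each $i$ and each $t$ the Choi state $J(\Lambda_t^i)$ lies in the c-c channel states set.

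\textbf{Step 2 (Choi state of a generalized classical map).} Fix $i$; write $\Lambda_t^i=U(t)\Lambda^{cl}_t U^{\dagger}(t)$ with $\Lambda^{cl}_t$ as in (\ref{map_classical}) for the basis $B_i=\{|e_k\rangle\}$ and some stochastic matrix $\Lambda_{jk}(t)$. A direct computation using the definition of $\Lambda^{cl}_t$ (which, applied to a matrix unit, keeps only the diagonal entry in $B_i$ and distributes it according to the stochastic matrix) gives
\begin{equation}\nonumber
J(\Lambda_t^i)=\sum_{j,k}\Lambda_{jk}(t)\,|f_j(t)\rangle\langle f_j(t)|\otimes|\bar{f}_k(t)\rangle\langle \bar{f}_k(t)|,
\end{equation}
where $|f_j(t)\rangle=U(t)|e_j\rangle$ and $|\bar{f}_k(t)\rangle$ denotes the complex conjugate in the reference basis. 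This is exactly the form of (\ref{c-c state}), i.e. a c-c channel state (diagonal in product orthonormal bases on both factors, with coefficients forming a valid stochastic matrix).

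\textbf{Step 3 (assemble and contrapose).} Substituting Step 2 into Step 1 we get that $J(\Lambda_t)$ is a convex combination of c-c channel states for every $t\in[t_1,t_2]$. Taking the contrapositive yields the claim: existence of a single $t\in[t_1,t_2]$ at which $J(\Lambda_t)$ fails to be a convex combination of c-c channel states rules out the type 0 decomposition over the whole interval.

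\textbf{Main obstacle.} There is no deep obstacle; the only slightly delicate point is Step 2, where one must correctly handle the $U^{\dagger}(t)(\cdot)U(t)$ sandwich on matrix units from the computational basis and confirm that the resulting Choi state is diagonal in \emph{product} bases on both tensor factors (hence genuinely c-c), rather than just in a single basis on one side. Once this calculation is in hand, the rest is simply linearity of $J$ and the fact that the type 0 decomposition is time-independent, so the convex structure transfers verbatim to the Choi picture at each $t$.
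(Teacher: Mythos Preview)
Your proposal is correct and follows essentially the same route as the paper: prove the contrapositive by using linearity of the Choi--Jamio{\l}kowski map together with an explicit computation showing that every generalized classical map has a c-c channel state as its Choi state. The only cosmetic differences are conventions in Step~2---the paper places the conjugated basis on the first tensor factor (obtaining $|\tilde{e}_j\rangle\langle\tilde{e}_j|\otimes|f_i\rangle\langle f_i|$ with $|f_i\rangle=U(t)|e_i\rangle$) and carries the normalization $\Lambda_{ij}(t)/d$ so that the marginal condition $\sum_j p_{ij}=1/d$ from (\ref{c-c state}) is visible; your formula swaps the tensor factors and omits the $1/d$, but the conclusion that the state is diagonal in a product of orthonormal bases, hence c-c, is unaffected.
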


For two-qubit system put
\begin{equation}\nonumber
 [ 1, \vec{r},\vec{s},T]
   :=\frac{1}{4}(\mathbb{I}\otimes\mathbb{I}+ \vec{r}\vec{\sigma}\otimes \mathbb{I}+\mathbb{I}\otimes \vec{s}\vec{\sigma}+\sum_{i,j}T_{ij} \sigma_i\otimes \sigma_j). 
\end{equation}
\begin{proposition}\label{lastprop}
Any qubit c-c channel states witness is of the form $W=[ 1, 0,\vec{s}_w,T_w]$ with $|\vec{s}_w| \leq 1$ and $||T_w||_{\infty} \leq 1$. If there is $t\in[t_1,t_2]$ such that Choi-Jamio{\l}kowski state $\rho_{\Lambda_{t}}=[1, 0,\vec{s},T]$ related to $\Lambda_t$ satisfies $X(\rho_{\Lambda_{t}})=|\vec{s}| + \left\|T\right\|_{1}>1$, then $\left\{\Lambda_t\right\}_t$ is not of type 0 on $[t_1,t_2]$. Moreover, optimal witness for $\rho_{\Lambda_{t}}$ is given by $W_{op}=[1,0, - \hat{s}, - O(T)]$  where $\hat{s}$ is normalized $\vec{s}$ and $O(T)$ is an orthogonal matrix from the polar decomposition of $T$.
\end{proposition}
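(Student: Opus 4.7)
The plan is to parameterize the c-c channel states in Bloch form, dualize that characterization to identify the witnesses, and then verify the proposed $W_{op}$ directly.

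\emph{Step 1.} I first compute the Choi-Jamio\l kowski state of a generalized classical qubit channel $\Lambda^{gcl}=U\Lambda^{cl}U^{\dagger}$, where $\Lambda^{cl}$ is classical in a basis $\{\tfrac12(\mathbb{I}\pm\vec n\vec\sigma)\}$ specified by a $2\times 2$ stochastic matrix with entries $p,q\in[0,1]$, and $U$ rotates the output Bloch axis to some second unit vector $\vec m$. Expanding all projectors in the Pauli basis and collecting coefficients, the Choi state takes the form $\rho_{\Lambda^{gcl}}=[1,0,\beta\vec m,\gamma\vec n\vec m^{T}]$ with $\beta=p-q$ and $\gamma=p+q-1$. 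The identity $\max(|\beta+\gamma|,|\beta-\gamma|)=|\beta|+|\gamma|$ then converts the box constraint $p,q\in[0,1]$ into $|\beta|+|\gamma|\le 1$, giving a complete Bloch parameterization of c-c channel states as $(\vec n,\vec m)\in S^{2}\times S^{2}$ and $(\beta,\gamma)$ in the $\ell^{1}$-unit ball.

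\emph{Step 2.} For a Hermitian $W=[w_0,\vec r_w,\vec s_w,T_w]$, orthogonality of the Pauli tensor products yields $\mathrm{Tr}(W\rho_{\Lambda^{gcl}})=\tfrac14[w_0+\beta(\vec s_w\!\cdot\!\vec m)+\gamma(\vec n^{T} T_w\vec m)]$. Since this is independent of $\vec r_w$, I may take $\vec r_w=0$ and rescale $w_0=1$ without loss of generality. Demanding nonnegativity over $|\beta|+|\gamma|\le 1$ and over unit $\vec n,\vec m$, the extremal choices $(\gamma{=}0,|\beta|{=}1)$ and $(\beta{=}0,|\gamma|{=}1)$ force the necessary conditions $|\vec s_w|\le 1$ and $\|T_w\|_{\infty}\le 1$; conversely these bounds are sufficient because $|\beta(\vec s_w\!\cdot\!\vec m)+\gamma(\vec n^{T} T_w\vec m)|\le|\beta|+|\gamma|\le 1$. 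This settles the first assertion.

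\emph{Step 3.} Plugging in $W_{op}=[1,0,-\hat s,-O(T)]$, which is admissible by Step~2 since $|-\hat s|=1$ and $\|-O(T)\|_{\infty}=1$, and using $\hat s\cdot\vec s=|\vec s|$ together with the polar-decomposition identity $\mathrm{Tr}(O(T)^{T} T)=\mathrm{Tr}(|T|)=\|T\|_{1}$, I obtain $\mathrm{Tr}(W_{op}\rho_{\Lambda_t})=\tfrac14\bigl(1-X(\rho_{\Lambda_{t}})\bigr)$, strictly negative whenever $X(\rho_{\Lambda_{t}})>1$. Combined with the preceding proposition, this forces $\{\Lambda_t\}_t$ to fail to be of type 0 on $[t_1,t_2]$. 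Optimality of $W_{op}$ among witnesses with $w_0=1$ is then a separate term-by-term minimization of $\mathrm{Tr}(W\rho_{\Lambda_t})$: Cauchy-Schwarz on the $\vec s_w$-piece yields the minimizer $\vec s_w=-\hat s$, and trace-norm/operator-norm duality $\min_{\|T_w\|_{\infty}\le 1}\mathrm{Tr}(T_w^{T} T)=-\|T\|_{1}$ yields $T_w=-O(T)$. The only delicate point I anticipate is the Bloch-form computation of Step~1 — in particular recognizing that composing a classical channel with an output unitary forces the $T$-block of the Choi state to factor as the rank-one dyad $\vec n\vec m^{T}$ — after which the proof reduces to standard convex duality.
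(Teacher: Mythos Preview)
Your proof is correct and follows the same convex–duality strategy as the paper, but you reach the key characterization more directly. The appendix first writes the witness condition in arbitrary dimension as $\sum_{i,j}p_{ij}W^{U,V}_{ij}\ge 0$ for all local unitaries $U,V$ and all extremal c-c probabilities, then specializes to qubits to obtain four scalar inequalities, and finally recasts these as $\mathrm{Tr}(WZ)\ge 0$ for the four families $Z=[1,0,\pm\hat a,0]$ and $Z=[1,0,0,\pm|\hat a\rangle\langle\hat b|]$. Those $Z$ are precisely your extremal c-c Choi states with $(\beta,\gamma)=(\pm1,0)$ or $(0,\pm1)$; your Step~1 simply lands on that parameterization in one shot by Bloch-expanding a single $\Lambda^{gcl}$ and recognizing the rank-one $T$-block, bypassing the intermediate $W^{U,V}_{ij}$ bookkeeping. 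The paper's route has the modest advantage that its first half is dimension-agnostic before specializing; your route is shorter and more transparent for qubits. The Step~3 optimality argument (Cauchy--Schwarz on $\vec s_w$ and $\|\cdot\|_\infty$/$\|\cdot\|_1$ duality on $T_w$) is identical in both.
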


Note that proposition \ref{lastprop} gives operational witness of quantum information backflow (in a weak sense - see remark \ref{rm}). For detailed discussion of this propositions see Appendix.

\begin{example}
Consider once more dynamical map $\left\{\Lambda_t\right\}_t$ described in example \ref{no_quantum}. In particular by the previous discussion it is not BLP-Markovian (there is information backflow in a usual sense). Observe that due to $E_{11}=\frac{1}{2}(\mathbb{I}+\sigma_3),E_{12}=\frac{1}{2}(\sigma_1+i\sigma_2)$ and $E_{22}=\frac{1}{2}(\mathbb{I}-\sigma_3)$ the Choi-Jamio{\l}kowski state related to $\Lambda_t$ is given by 
\begin{eqnarray}
\rho_{\Lambda_{t}}&=&\frac{1}{2}\sum_{i,j=1}^2 E_{ij}\otimes \Lambda_t(E_{ij})=[1,0,\vec{s}_{\rho_{\Lambda_{t}}}(t),T_{\rho_{\Lambda_{t}}}(t)] \nonumber
\end{eqnarray}where $T_{\rho_{\Lambda_{t}}}(t)=\mathrm{diag}( \lambda_1(t),-\lambda_2(t),\lambda_3(t))$ and $\vec{s}_{\rho_{\Lambda_{t}}}(t)=(0,0,r_3(t))^T$. Calculations of length of $\vec{s}_{\rho_{\Lambda_{t}}}(t)$ and trace norm of $T_{\rho_{\Lambda_{t}}}(t)$ according to proposition \ref{lastprop} gives us
\begin{equation}\nonumber
X(\rho_{\Lambda_{t}})=r_3(t)+\lambda_1(t)+\lambda_2(t)+\lambda_3(t)=2+\frac{1}{2}\sin t>1,
\end{equation}so there is quantum information backflow in a weak sense (observed backflow cannot be simulated by convex combinations of classical dynamics).
\end{example}

\textit{Discussion.-} We have shown that the presence of information backflow should not immediately indicate quantum memory effect. In order to formalized this possibility of classical backflow, we have introduced the notion of elementary dynamical map (with respect to a given basis) and we have characterize it in the the simplest qubit case. Moreover, we have considered dynamical maps which can be expressed as a time-independent convex combinations of this elementary maps and we have used them to formulate definitions of a truly quantum information backflow.

Presented results provide a starting point and require further analysis. It is natural to ask for a characterization of those dynamical maps which can be expressed as a convex combinations of elementary dynamical maps or at least for some sufficient or necessary criteria. This task may be quite difficult in full generality, as property of being elementary with respect to some basis is not automatically preserved by convex combinations. Other interesting problems should be related to operational characterization of elementary dynamical maps for general qudit case with $d>2$.

\textit{Acknowledgments.-} The work is part of the ICTQT IRAP project of FNP. The "International Centre for Theory of Quantum Technologies" project (contract no. 2018/MAB/5) is carried out within the International Research Agendas Programme of the Foundation for Polish Science co-financed by the European Union from the funds of the Smart Growth Operational Programme, axis IV: Increasing the research potential (Measure 4.3).

 Discussions with Bihalan Bhattacharya, Samyadeb Bhattacharya, Micha{\l} Horodecki, Ryszard Horodecki and Kamil Korzekwa are acknowledged.

\appendix
\section{Witnesses of quantum information backflow - characterization of convex set of c-c channel states in a qubit case}

Notice that if a dynamical map $\left\{\Lambda_t\right\}_t$ is a convex combination of generalized classical dynamical maps (\ref{map_2}) on $[t_1,t_2]$ (is of type 0 on $[t_1,t_2]$), then for any fixed time $t\in [t_1,t_2]$ image of a CPTP map $\Lambda_t$ under Choi-Jamio{\l}kowski isomorphism is given by a convex combinations of c-c channel states
\begin{equation}\label{c-c state}
\rho_{cc}=\sum_{i,j=1}^d p_{ij}\left|f_i\right\rangle\left\langle f_i\right|\otimes \left|e_j\right\rangle\left\langle e_j\right|
\end{equation}where $\{p_{ij}\}$ satisfying $\sum_{j} p_{ij}=\frac{1}{d}$ (we use here convention of maximally mixed first system) and $\left\{\left|f_i\right\rangle\right\}_i,\left\{\left|e_j\right\rangle\right\}_j$ are some fixed orthonormal basis. Indeed channel state for map (\ref{map_2}) is given by
\begin{eqnarray}
  \rho_{gcl} &=& \frac{1}{d}\sum_{k,l,i,j=1}^d \Lambda_{ij}(t)\left\langle e_j\right|E_{kl}\left|e_j\right\rangle E_{kl}\otimes\left|f_i\right\rangle\left\langle f_i\right| \nonumber\\
   &=& \sum_{i,j=1}^d \frac{\Lambda_{ij}(t)}{d}\left|\tilde{e}_j\right\rangle\left\langle \tilde{e}_j\right|\otimes\left|f_i\right\rangle\left\langle f_i\right| \nonumber
\end{eqnarray}where for any $i=1,2,\ldots ,d$ we put $\left|f_i\right\rangle=U(t)\left|e_i\right\rangle$ and $\left|\tilde{e}_j\right\rangle$ stands for conjugated vector (with respect to the computational basis such that $E_{kl}=\left|k\right\rangle\left\langle l\right|$). Obviously $\rho_{gcl}$ is of the form (\ref{c-c state}) as $\Lambda(t)$ is a stochastic matrix.

Therefore, in order to show that a given evolution cannot be represented as a convex combination of generalized classical dynamical maps it is enough to find time $t$ for which the Choi-Jamio{\l}kowski state related to the given dynamical map is not a convex combination of c-c channel states. This can be done using witnesses $W$ for c-c channel states.

 Observe that for any bases $\left\{\left|f_i\right\rangle\right\}_i, \left\{\left|e_j\right\rangle\right\}_j$ there are local unitaries $U,V\in M_d(\mathbb{C})$ responsible for changing basis to the computational one. Put $W^{U,V}=U \otimes V W  U^{\dagger} \otimes V^{\dagger} $ and $W^{U,V}_{ij}= [U \otimes V W  U^{\dagger} \otimes V^{\dagger} ]_{ii,jj}$. Then the condition for being c-c channel states witness is given by
\begin{equation}\label{characterisation}
\sum_{i,j=1}^dp_{ij} W^{U,V}_{ij} \geq 0
\end{equation}for any pair  $U,V$ and all c-c channels probabilites. The extremal points of the set of such 
probabilities are all $\{p_{ij}\}$ such that $ p_{ij}=\delta_{i,j(i)} \frac{1}{d}$ for $ j(i)$  being some deterministic function of index i. 
- it is enough to satisfy condition (\ref{characterisation}) only for this type of probabilities.

Now let us restrict our attention to the qubit case. The representation of normalized witness ($\mathrm{Tr} W = 1$) in the 
Hilbert-Schmidt basis is given as
\begin{eqnarray}
  W = [ 1, \vec{r},\vec{s},T]  \label{representation}
\end{eqnarray}with real vectors $\vec{r},\vec{s} \in R^{3}$ and real matrix $T\in M_3(\mathbb{R})$. Condition (\ref{characterisation}) boils down to
\begin{eqnarray}
W^{U,V}_{11} + W^{U,V}_{21} \geq 0;\ W^{U,V}_{12} + W^{U,V}_{22} \geq 0; \nonumber \\
W^{U,V}_{11} + W^{U,V}_{22} \geq 0; \ W^{U,V}_{12} + W^{U,V}_{21}\geq 0\nonumber
\label{qubit-cond}.
\end{eqnarray}With a little bit of calculations one can show that this is equivalent to $\mathrm{Tr}\left(W^{U,V}Z\right)\geq 0$ for all $Z$ of the following form
\begin{eqnarray}\nonumber
[ 1, 0, \hat{z}, 0 ] ;\  [ 1, 0,  - \hat{z}, 0 ] ; \ [ 1, 0, 0, | \hat{z} \rangle \langle \hat{z}|  ]; \ 
[1, 0, 0, - | \hat{z} \rangle \langle \hat{z}| ].
\end{eqnarray} 
Transferring action of $U,V$ form $W$ to $Z$ we see that the previous condition is true if and only if $\mathrm{Tr}\left(WZ\right)\geq 0$ with all $Z$ of the form
\begin{eqnarray}\nonumber
[ 1, 0, \hat{a}, 0 ] ;\  [ 1, 0,  - \hat{a}, 0 ] ; \ [ 1, 0, 0, | \hat{a} \rangle \langle \hat{b}|  ]; \ 
[1, 0, 0, - | \hat{a} \rangle \langle \hat{b}| ] 
\end{eqnarray} 
for all normalized vectors $\hat{a}$ and $\hat{b}$. Therefore, the conditions (\ref{characterisation}) are given by
\begin{eqnarray}\nonumber
1 \pm \vec{s}\hat{a} \geq 0;\ 1 \pm \langle \hat{a} |T | \hat{b} \rangle \geq 0\nonumber
\label{qubit-cond1}.
\end{eqnarray}If so, then qubit c-c channel states witnesses are all the operators (\ref{representation}) with the parameters satisfying 
\begin{equation}
|\vec{s}| \leq 1, \ ||T||_{\infty} \leq 1
\label{qubit-witn-characterisation}
\end{equation}and the arbitrary vector $\vec{r}$ (which may be put by convention to zero). Consider now any channel state $\rho=[1,0,\vec{s}_{\rho},T_{\rho}]$. It is a convex combination of c-c channel states if and only if a minimum of $\mathrm{Tr}(W \rho)$ over 
all witnesses satisfying (\ref{qubit-witn-characterisation}) is non-negative. This can be equivalently expressed by condition
\begin{equation}\nonumber
1 +\mathrm{min}_{|\vec{s}|\leq 1}( \vec{s}_{\rho},\vec{s} ) + \mathrm{min}_{||T||_{\infty} \leq 1}
 \mathrm{Tr}\left(T_{\rho}T^{T}\right) \geq 0.
\end{equation}
Note that optimizations over $\vec{s}$ and $T$ are independent. Therefore, $\rho$ is a convex combination of c-c channel states if and only if 
\begin{equation}\label{final}
X(\rho) := |\vec{s}_{\rho}| + \left\|T_{\rho}\right\|_{1} \leq  1.
\end{equation}

The optimal c-c channel states witness for a given channel state $\rho$ is constructed as  $W_op(\rho)=[1,0, - \hat{s}_{\rho}, - O(T_{\rho})]$ where $\hat{s}_{\rho}= \frac{\vec{s}_{\rho}}{|\vec{s}_{\rho} |}$ and $O(T_\rho)$ is an orthogonal matrix from the polar decomposition of $T_\rho$.

If there is a $t\in [t_1,t_2]$ such that $\rho_{\Lambda_{t}}$ expressed via (\ref{representation}) violates (\ref{final}), then $\left\{\Lambda_t\right\}_t$ is not of type 0 on $[t_1,t_2]$. Therefore, $X(\rho)$ can be seen as a witness of quantum information backflow in a weak sense.\newline
\end{document}